\newcommand{\ontop}[2]{\genfrac{}{}{0pt}{}{#1}{#2}}
\newcommand{\set}[1]{\left\{#1\right\}}
\newcommand{\pr}[1]{\left(#1\right)}
\newcommand{\fpr}[1]{\mathopen{}\left(#1\right)}
\newcommand{\abs}[1]{{\left|#1\right|}}
\newcommand{\enset}[2]{\left\{#1 ,\ldots , #2\right\}}
\newcommand{\enpr}[2]{\pr{#1 ,\ldots , #2}}
\newcommand{\ifam}[1]{\mathcal{#1}}
\newcommand{\iset}[2]{{{#1}\cdots{#2}}}
\newcommand{\freq}[1]{fr\fpr{#1}}
\newcommand{\define}{\leftarrow}
\newcommand{\target}[1]{t\fpr{#1}}
\newcommand{\source}[1]{src\fpr{#1}}
\newcommand{\cost}[1]{c\fpr{#1}}
\newcommand{\costmdl}[1]{c_{\textsc{MDL}}\fpr{#1}}
\newcommand{\costt}[1]{c_D\fpr{#1}}
\newcommand{\leaves}[1]{lvs\fpr{#1}}
\newcommand{\inter}[1]{intr\fpr{#1}}
\newcommand{\besttree}[1]{bt\fpr{#1}}
\newcommand{\pospath}[1]{pos\fpr{#1}}
\newcommand{\negpath}[1]{neg\fpr{#1}}
\newcommand{\sets}[1]{sets\fpr{#1}}
\newcommand{\treepath}[1]{path\fpr{#1}}
\newtheorem{theorem}{Theorem}
\newtheorem{proposition}[theorem]{Proposition}
\newtheorem{corollary}[theorem]{Corollary}
\newtheorem{example}[theorem]{Example}
\begin{document}
\title{Finding Good Itemsets by Packing Data}

\author{Nikolaj Tatti\\ % names
HIIT, Department of Information and Computer Science \\
Helsinki University of Technology \\
ntatti@cc.hut.fi
\and
Jilles Vreeken \\
Department of Computer Science \\
Universiteit Utrecht \\
jillesv@cs.uu.nl \\
\\ % Departments
\\ % Uni's
\\ % emails
} 

\maketitle
\thispagestyle{empty}
\begin{abstract}
	The problem of selecting small groups of itemsets that represent the data
	well has recently gained a lot of attention. We approach the problem by
	searching for the itemsets that compress the data efficiently. As a
	compression technique we use decision trees combined with a refined version
	of MDL.  More formally, assuming that the items are ordered, we create a
	decision tree for each item that may only depend on the previous items.
	Our approach allows us to find complex interactions between the attributes,
	not just co-occurrences of 1s.  Further, we present a link between the
	itemsets and the decision trees and use this link to export the itemsets
	from the decision trees.  In this paper we present two algorithms.  The
	first one is a simple greedy approach that builds a family of itemsets
	directly from data.  The second one, given a collection of candidate
	itemsets, selects a small subset of these itemsets.  Our experiments show
	that these approaches result in compact and high quality descriptions of
	the data.
\end{abstract}

\section{Introduction}\label{section:introduction}

One of the major topics in data mining research
is the discovery of interesting patterns in data.
From the introduction of frequent itemset mining
and association rules \cite{agrawal96}, the pattern
explosion was acknowledged: at high frequency thresholds
only common knowledge is revealed, while at low thresholds
prohibitively many patterns are returned.

Part of this problem can be solved by reducing these collections either
lossless or lossy, however even then the resulting collections are often so large
that they cannot be analyzed by hand or even machine.  Recently, it was therefore
argued \cite{han07} that while the efficiency of the search process has
received ample attention, there still exists a strong need for pattern mining
approaches that deliver compact, yet high quality, collections of patterns
(see Section~\ref{section:related} for a more detailed
discussion).
Our goal is to identify the family of itemsets that form the best description of the data.
Recent proposals to this end all consider just part of 
the data, by either only considering co-occurrences~\cite{siebes06} or 
being lossy in nature~\cite{knobbe06b,bringmannZ07,summarization}.
In this paper, we present two methods that do describe all interactions in the data.
Although different in approach, both methods return small families of
itemsets, which are selected to provide high-quality lossless descriptions of
the data in terms of local patterns.  Importantly, our parameterless methods
regard the data symmetrically.  That is, we consider not just the 1s
in the data, but also the 0s. Therefore, we are able to find patterns that
describe {\em all} interactions between items in the data, not just
co-occurrences. 

As a measure of quality for the collection of itemsets
we employ the practical variant of Kolmogorov Complexity \cite{paul}, 
the Minimum Description Length (MDL) principle \cite{peter07mdl}. 
This principle implies that we should do induction through 
compression. It states that the best model is the model that 
provides the best compression of the data: it is the model that captures best the 
regularities of the data, with as little redundancy as 
possible. 

The main idea of our approach is to use decision trees
to determine the shortest possible encoding of an attribute,
by using the values of already transmitted attributes.
For example, let us assume two binary attributes $A$ and $B$. 
Now say that for 90\% of the time when the attribute $A$ has a value of $1$, 
the attribute $B$ has a value of $0$.
If this situation occurs frequently, we
recognize this dependency, and include the item
$A$ in the tree deciding how to encode $B$. 

Using such trees allows us to find complex interactions between the items while
at the same time MDL provides us with a parameter-free framework for removing
fake interactions that are due to the noise in the data. The main outcome of 
our methods is not the decision trees, but the group of itemsets that form their paths:
these are the important patterns in the data since they capture
the dependencies between the attributes implied by the decision trees.

The two algorithms we introduce to this end are orthogonal in approach.  Our
first method builds the encoding decision trees directly from the data; it
greedily introduces splits until no split can help to compress the data
further.  Just as naturally as we can extract itemsets from these trees, we can
consider the trees that can be built from a collection of itemsets. That link
is exploited by our second method, which tries to select the best itemsets from a
larger collection.

Experimental evaluation shows that both methods return
small collections of itemsets that provide high quality 
data descriptions.
These sets allow for very short encoding of the data,
which inherently shows that the most important patterns
in the data are captured. 
As the number of itemsets are small, we can easily expose the resulting itemsets
to further analysis, either by hand or by machine.

The rest of this paper is as follows. After the covering preliminaries in 
Section~\ref{section:preliminaries}, we discuss how to use decision trees to optimally
encode the data succinct in Section~\ref{section:model}.
Next, in Section~\ref{section:itemsets}, we explain the connection between 
decision trees and itemsets. Section~\ref{section:choose} introduces
our method with which good itemsets can be selected by weighing these
through our decision tree encoding.
Related work is discussed in Section~\ref{section:related}, after which
we present the experiments on our methods in Section~\ref{section:experiments}.
We round up with discussion and conclusions in Sections~\ref{section:discussion} and \ref{section:conclusion}.

\section{Preliminaries and Notation}\label{section:preliminaries}
In this section we introduce preliminaries and notations used in subsequent
sections.

A \emph{binary dataset} $D$ is a collection of $\abs{D}$ \emph{transactions},
binary vectors of length $K$. The $i$th element of a random transaction is
represented by an \emph{attribute} $a_i$, a Bernoulli random variable. We
denote the collection of all the attributes by $A = \enset{a_1}{a_K}$. An
\emph{itemset} $X = \enset{x_1}{x_L} \subseteq A$ is a subset of attributes. We
will often use the dense notation $X = \iset{x_1}{x_L}$.

Given an itemset $X$ and a binary vector $v$ of length $L$, we use the notation
$p\fpr{X = v}$ to express the probability of $p\fpr{x_1 = v_1, \ldots, x_L =
v_L}$. If $v$ contains only 1s, then we will use the notation $p\fpr{X = 1}$,
if $v$ contains only 0s, then we will use the notation $p\fpr{X = 0}$.

Given a binary dataset $D$ we define $q_D$ to be an \emph{empirical distribution},
\[
q_D\fpr{A = v} = \abs{\set{t \in D \mid t = v}} / {\abs{D}}.
\]
We define the frequency of an itemset $X$ to be $\freq{X} = q_D\fpr{X = 1}$.

In the paper we use the common convention $0 \log 0 = 0$. All logarithms in the paper are of base $2$. 

In the subsequent sections we will need some knowledge of graphs. All the
graphs in the paper are directed. Given a graph $G$ we denote by $V\fpr{G}$ the
set of vertices and by $E\fpr{G}$ the edges of $G$. A directed graph is said to
be \emph{acyclic} (DAG) if there is no cycle in the graph. A directed graph 
is said to be \emph{directed spanning tree} if each node (except one special node) has exactly
one outgoing edge. The special node has no outgoing edge and is called \emph{sink}.

\section{Packing Binary Data with Decision Trees}\label{section:model}
In this section we present our model for packing the data and a greedy
algorithm for searching good models.
\subsection{The Definition of the Model}
Our goal in this section is to define a model that is used to transmit a
binary dataset $D$ from a transmitter to a receiver.
We do this by transmitting one transaction at the time, the order of which
does not matter. Within a single transaction we transmit the items one 
at the time. 

Assume that we are transmitting an attribute $a_t$. As the attribute may have two
values, we need to have two codes to indicate its value. We define the table in
which these two codes are stored to be a {\em coding table}.
Obviously, the codes need to be optimal, that is, as short as possible.
From information theory \cite{inftheory}, we have the optimal Shannon codes 
of length $-\log(p(x))$. Here, the optimal code lengths are thus $-\log q_D\fpr{a_t = 1}$ and $-\log q_D\fpr{a_t =
0}$. We need to transmit the attribute $\abs{D}$ times. The cost of these
transmissions is
\[
-\abs{D} \sum_{v = \set{0, 1}} q_D\fpr{a_t = v}\log q_D\fpr{a_t = v}.
\]
This is the simplest case of encoding $a_t$. Note that we are 
not interested in the actual codes, but only in their lengths: they allow
us to determine the complexity of a model.

A more complex and more interesting approach to encode $a_t$ succinct 
is to have several coding tables from which the transmitter chooses one for transmission.
Choosing the coding table is done via a decision tree that branches on the values of 
other attributes in the same transaction. That is, we have a decision tree used for
encoding $a_t$ in which each leaf node is associated with a different coding table 
of $a_t$.
The leaf is selected by testing the values of other attributes within the same
transaction.

\begin{example}
Assume that we have three attributes, $a$, $b$, and $c$ and consider the trees
given in Figure~\ref{fig:extree}. In Figure~\ref{fig:extree:a} we have the
simplest tree, a simple coding table with no dependencies at all. A more
complex tree is given in Figure~\ref{fig:extree:b} where the transmitter
chooses from two coding table for $a$ based on the value of $c$.
Similarly in, Figure~\ref{fig:extree:d} we have three different coding tables
for $c$. The choice of the coding table in this case is based on the values of
$a$ and $b$.

\begin{figure}[htbp!]
\center
\subfigure[$T_1$, Trivial tree encoding $a$]{\label{fig:extree:a}\makebox[4cm]{\includegraphics[scale=0.4]{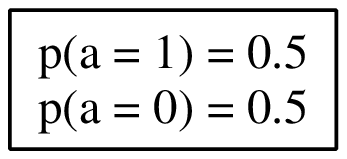}}}
\subfigure[$T_2$, Alternative tree for $a$]{\label{fig:extree:b}\makebox[4cm]{\includegraphics[scale=0.4]{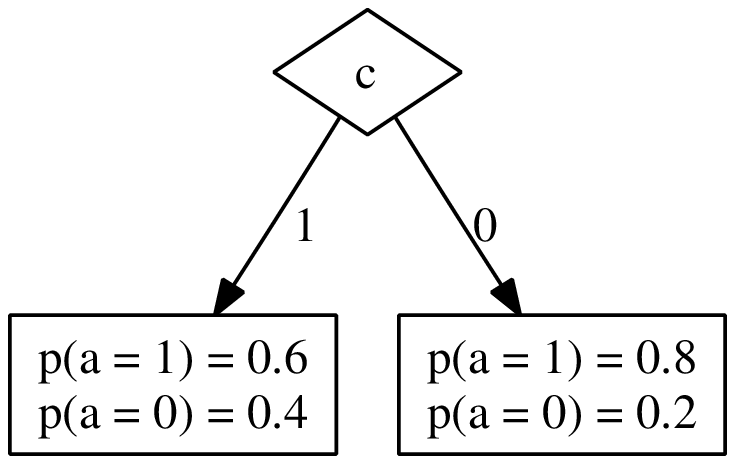}}}\\
\subfigure[$T_3$, Tree for $b$]{\label{fig:extree:c}\includegraphics[scale=0.4]{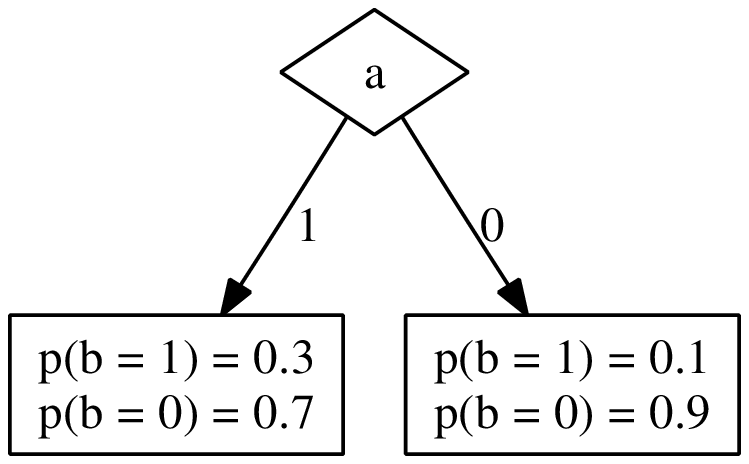}}
\subfigure[$T_4$, Tree for $c$]{\label{fig:extree:d}\includegraphics[scale=0.4]{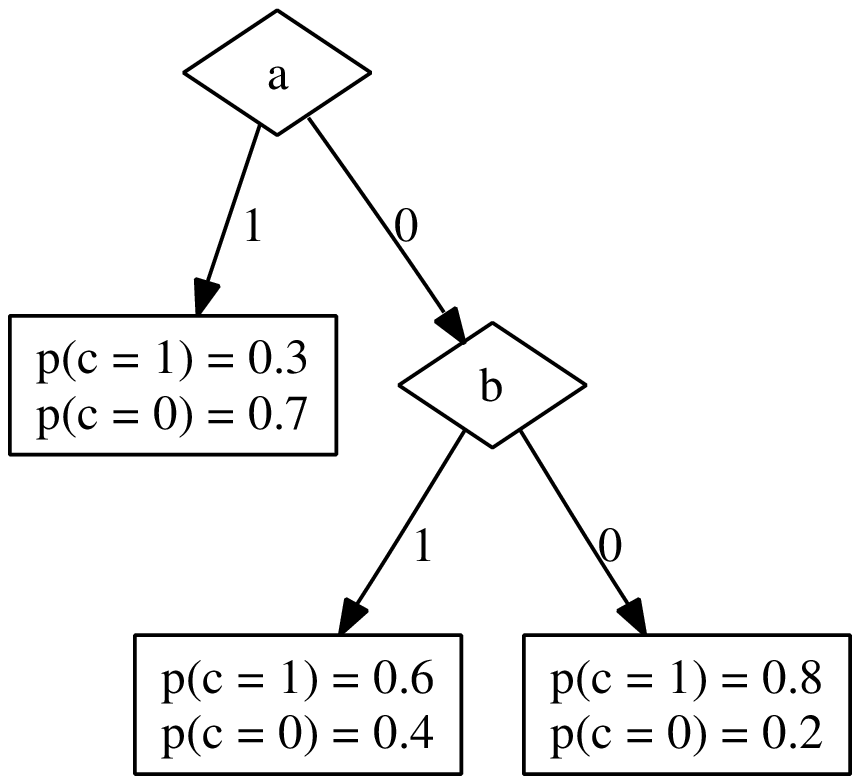}}
\label{fig:extree}
\caption{Toy decision trees.}
\end{figure}
\end{example}

Let us introduce some notation. Let $T$ be a tree encoding $a_t$. We use
the notation $\target{T} = a_t$. We set $\source{T}$ to be the set
of all items used in $T$ for choosing the coding table.

\begin{example}
For the tree $T_3$ in Figure~\ref{fig:extree:c} we have $\target{T_3} = b$
and $\source{T_3} = \set{a}$ and for $T_4$ in Figure~\ref{fig:extree:d} we
have $\target{T_4}$ and $\source{T_4} = \set{a, b}$.
\end{example}

To define the cost of transmitting $a_t$ we first define $\leaves{T}$ to be the
set of all leaves in $T$. Let $L \in \leaves{T}$ be a leaf and $q_D\fpr{L}$ be the
probability of $L$ being chosen. Further, $q_D\fpr{a_t = v \mid L}$ is the probability of
$a_t = v$ given that $L$ is chosen. We now know that the optimal cost, denoted by $\costt{T}$, is
\[
-\abs{D} \sum_{L \in \leaves{T}} \sum_{v = \set{0, 1}} q_D\fpr{a_t = v, L}\log q_D\fpr{a_t = v \mid L}.
\]

\begin{example}
The number of bits needed by $T_1$ in Figure~\ref{fig:extree:a} to transmit
$a$ in a random transaction is
\[
-0.5\log 0.5 - 0.5 \log 0.5 = 1.
\]
Similarly, if we assume that $q_D(a = 1) = q_D(a = 0) = 0.5$, the number of bits
needed by $T_3$ to transmit $c$ in a random transaction is
\[
\begin{split}
&0.5\pr{-0.3\log 0.3 - 0.7 \log 0.7} + \\
&\quad 0.5\pr{-0.1\log 0.1 - 0.9 \log 0.9} = 0.62.
\end{split}
\]
\end{example}

In order for the receiver to decode the attribute $a_t$ he must know
what coding table was used. Thus, he must be able to use the same decision
tree that the transmitter used for encoding $a_t$. To ensure this, the transmitter
must know $\source{T}$ when decoding $a_t$. So, the attributes must have
an order in which they are sent \emph{and} the decision trees may only use
the attributes that have already been transmitted.

The aforementioned requirement is easily characterized by the following
construction. Let $G$ be a directed graph with $K$ nodes, each node
corresponding to an attribute. The graph $G$ contains all the edges of form
$\pr{a_t, a_s}$ where $a_s \in \source{T}$, where $T$ is the tree encoding
$a_t$. We call $G$ the \emph{dependency graph}. It is easy to see that there
exists an order of the attributes if and only if $G$ is an acyclic
graph (DAG). If $G$ constructed from a set of trees $\mathcal{T} =
\enset{T_1}{T_K}$ is indeed DAG we call the set $\mathcal{T}$ a \emph{decision
tree model}.

\begin{example}
Consider a graph given in Figure~\ref{fig:exdag:a} constructed from the trees
$T_2$, $T_3$, and $T_4$ (Figure~\ref{fig:extree}). We cannot use this
combination of trees for encoding since there is a cycle in the graph. On the other
hand if we use trees $T_1$, $T_3$, and $T_4$, then the resulting graph (given
in Figure~\ref{fig:exdag:b}) is acyclic and thus these trees can be used for
the transmission.
\begin{figure}[htbp!]
\center
\subfigure[Dependency graph with cycles]{\label{fig:exdag:a}\makebox[4.3cm]{\includegraphics[scale=0.6]{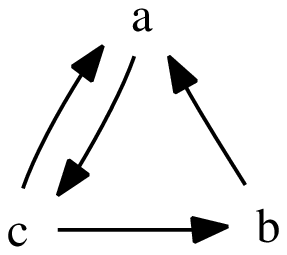}}}%
\subfigure[Dependency acyclic graph]{\label{fig:exdag:b}\makebox[4.3cm]{\includegraphics[scale=0.6]{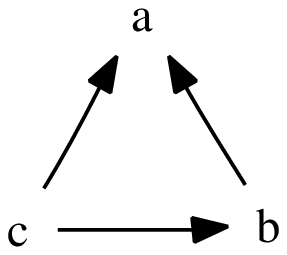}}}
\label{fig:exdag}
\caption{Dependency graphs constructed from the trees given in Figure~\ref{fig:extree}.}
\end{figure}
\end{example}

\subsection{Encoding the Data}

In order for the receiver to be able to decode the attributes, he must
know both the coding tables and the trees. Hence, we need to transmit
both of these. First, we cover how the coding tables, the leafs of the 
decision trees, are transmitted.

To transmit the coding tables we use the concept of Refined MDL~\cite{peter07mdl}.
Refined MDL is an improved version of the more traditional two-part MDL (sometimes
referred to as the crude MDL). The basic idea of the refined variant is that 
instead of transmitting the coding tables, the transmitter and the receiver 
use so called universal codes. Universal codes are the cornerstone of Refined MDL.
As these are codes can be derived without any further shared information,
this allows for a good weighing of the actual complexity of the data and model,
with virtually no overhead.
While the practicality of applying such codes depends on the type of the model,
our decision trees are particularly well-suited.

These universal codes provide a cost called the \emph{complexity} of the model.
This cost can be calculated as follows: let $L$ be a leaf in the decision tree (i.e. coding table),
and $M$ be the number of transactions for which $L$ is used. 
Then the complexity of this leaf, denoted by $\costmdl{L}$, is
\[
\costmdl{L} = \log \sum_{k = 0}^M \pr{\ontop{M}{k}}\pr{\frac{k}{M}}^k\pr{\frac{M-k}{M}}^{M - k} .
\]
In general, there is no known closed formula for the complexity of the model.
Hence estimates are usually employed~\cite{rissanen96fisher}.  However, for our
tree models we can apply an existing linear-time algorithm that solves the complexity
for multinomial models~\cite{kontkanen07linear}. 
We should also point out that the Refined MDL
is asymptotically equivalent to Bayes Information Criteria (BIC) if the number
of transactions goes to infinity and the number of free parameters stays fixed.
However, for moderate numbers of transactions there may be significant
differences~\cite{peter07mdl}.

Now that the coding tables can be transmitted,
we need to know how to transmit the actual tree $T$. To encode the tree we simply
transmit the nodes of the tree in a sequence. We use one bit to indicate
whether the node is a leaf, or an intermediate node $N \in \inter{T}$. 
For an intermediate node we additionally use $\log K$ bits, where $K$ is the
number of attributes in $D$, to indicate the item that is used for the split.

The combined cost of a tree $T$, denoted by $\cost{T}$, is
\[
\begin{split}
\cost{T} = & \sum_{N \in \inter{T}}{\bigl( 1 + \log K\bigr)} \\
           & + \costt{T} + \sum_{L \in \leaves{T}} {\bigl(1 + \costmdl{L} \bigr)},
\end{split}
\]
that is, the cost $\cost{T}$ is the number of bits needed to transmit the
tree \emph{and} the attribute $a_t$ in each transaction of $D$.

\begin{example}
Assume that we have a dataset with $100$ transactions and $3$ items.
Assume also that $q_D\fpr{a = 0} = q_D\fpr{a = 1} = 0.5$.
We know that the complexity of the leaves in this case is $\costmdl{L} = 3.25$.
The cost of the tree $T_3$ (Figure~\ref{fig:extree:c} is
\[
\begin{split}
\cost{T_3} = & 1 + \log 3 \\
& + 1 + 3.25 + 50\pr{-0.3\log 0.3 - 0.7 \log 0.7} \\
& + 1 + 3.25 + 50\pr{-0.1\log 0.1 - 0.9 \log 0.9} \\
= & 69.8.
\end{split}
\]
\end{example}

Given a decision tree model $\mathcal{T} = \enset{T_1}{T_K}$ we define the cost
$\cost{\mathcal{T}} = \sum_i\cost{T_i}$. The cost $\cost{\mathcal{T}}$ is the
number of bits needed to transmit the trees, one for each attribute, and the
complete dataset $D$.

We should point out that for data with many items, the term $\log K$ grows
and hence the threshold increases for selecting an attribute into any decision tree.
This is an interesting behavior, as due to the finite number of
transactions, for datasets with many items there is an increased probability that
two items will correlate, even though they are independent according to the
generative distribution.

\subsection{Greedy Algorithm}
Our goal is to find the decision tree model with the lowest complexity cost. 
However, since
many problems related to the decision trees are \textbf{NP}-complete~\cite{murthy96thesis}
we will resort to a greedy heuristic to approximate the decision tree model
$\mathcal{T}$ with the lowest $\cost{\mathcal{T}}$. 
It is based on the ID3 algorithm.

To fully introduce the algorithm we need some notation: By
$\textsc{TrivialTree}\fpr{a_t}$ we mean the simplest tree packing $a_t$ without
any other attributes (see Figure~\ref{fig:extree:a}). Given a tree
$T$, a leaf $L \in \leaves{T}$, and an item $c$ not occurring in the path from
$L$ to the root of $T$, we define $\textsc{SplitTree}\fpr{T, L, c}$ to be a new
tree where $L$ is replaced by a non-leaf node testing the value of $c$ and
having two leaves as the branches.

The algorithm \textsc{GreedyPack} starts with a tree model consisting only of
trivial trees.  The algorithm finds the tree which saves the most bits by
splitting. To ensure that the decision tree model is valid,
\textsc{GreedyPack} builds a dependency graph $G$ describing the dependencies
of the trees and makes sure that $G$ is acyclic. The algorithm terminates when
no further split can be made that saves any bits.

\begin{algorithm}[ht!]
\caption{\textsc{GreedyPack} algorithm constructs a decision tree model
$\mathcal{T} = \enset{T_1}{T_K}$ from a binary data $D$.}
\begin{algorithmic}[1]
\STATE $V \define \enset{v_1}{v_K}$, $E \define \emptyset$.
\STATE $G \define \pr{V, E}$.
\STATE $T_i \define \textsc{TrivialTree}\fpr{a_i}, \text{ for } i = 1, \ldots, K$.
\WHILE{there are changes}
\FOR{$i = 1, \ldots, K$}
\STATE $O_i \define T_i$.
\FOR{$L \in \leaves{T_i}$, $j = 1, \ldots, K$}
\IF{$E \cup \pr{v_i, v_j}$ is acyclic \textbf{and} $a_j \notin \treepath{L}$}
\STATE $U \define \textsc{SplitTree}\fpr{T_i, L, a_j}$.
\IF{$\cost{U} < \cost{O_i}$}
\STATE $O_i \define U$,  $s_i \define j$.
\ENDIF
\ENDIF
\ENDFOR
\ENDFOR
\STATE $k \define \arg \min_i \set{\cost{O_i} - \cost{T_i}}$.
\IF{$\cost{O_k} < \cost{T_k}$}
\STATE $T_k \define O_k$.
\STATE $E \define E \cup \pr{v_k, v_{s_k}}$.
\ENDIF
\ENDWHILE
\RETURN $\enset{T_1}{T_K}$.
\end{algorithmic}
\end{algorithm}

\section{Itemsets and Decision Trees}\label{section:itemsets}
So far we have discussed how to transmit binary data by using decision trees. In
this section we present how to select the itemsets representing the
dependencies implied by the decision trees. We will use this link in Section~\ref{section:choose}.
A similar link between itemsets and decision trees is explored in~\cite{nijssen07decision} although
our setup and goals are different.

Given a leaf $L$, the dependency of the item $a_t$ is captured in the coding
table of $L$. Hence we are interested in finding itemsets that carry the same
information. That is, itemsets from which we can compute the coding table.
To derive the codes for the leaf $L$ it is sufficient to compute the probability
\begin{equation}
q_D\fpr{a_t = 1 \mid L} = q_D\fpr{a_t = 1, L} / q_D\fpr{L}.
\label{eq:coding}
\end{equation}

Our goal is to express the probabilities on the right side of the equation using
itemsets. In order to do that let $P$ be the path from $L$ to its root. Let
$\pospath{L}$ be the items along the path $P$ which are tested
positive. Similarly, let $\negpath{L}$ be the attributes which are tested
negative. Using the inclusion-exclusion principle we see that
\begin{equation}
\begin{split}
q_D\fpr{L} & = q_D\fpr{\pospath{L} = 1, \negpath{L} = 0} \\
& = \sum_{V \subseteq \negpath{L}} (-1)^{\abs{V}} \freq{\pospath{L} \cup V}.
\end{split}
\label{eq:itemsets}
\end{equation}
We compute $q_D\fpr{a_t = 1, L}$ in a similar fashion. Let us define 
$\sets{L}$ for a given leaf $L$ to be
\[
\begin{split}
\sets{L} = & \set{V \cup \pospath{L} \mid V \subseteq \negpath{L}} \\
\cup & \set{V \cup \pospath{L} \cup \set{a_t} \mid V \subseteq \negpath{L}}.
\end{split}
\]
Combining Eqs.~\ref{eq:coding}--\ref{eq:itemsets} we see that the collection
$\sets{L}$ satisfies our goal.

\begin{proposition}
The coding table associated with the leaf $L$ can be computed from the
frequencies of $\sets{L}$.
\end{proposition}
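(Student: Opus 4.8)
The plan is to reduce the statement to a computation of two joint probabilities and then invoke inclusion-exclusion twice. The coding table at $L$ stores precisely the two Shannon codes for $\target{T} = a_t$ at that leaf, whose lengths are $-\log q_D\fpr{a_t = 1 \mid L}$ and $-\log q_D\fpr{a_t = 0 \mid L}$. Since these two probabilities sum to one, the whole coding table is determined by the single number $q_D\fpr{a_t = 1 \mid L}$. By Eq.~\ref{eq:coding} this equals $q_D\fpr{a_t = 1, L} / q_D\fpr{L}$, so it suffices to show that both the numerator $q_D\fpr{a_t = 1, L}$ and the denominator $q_D\fpr{L}$ are expressible as functions of the frequencies of the itemsets in $\sets{L}$.

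The denominator is handled directly by Eq.~\ref{eq:itemsets}, which I would simply quote: it writes $q_D\fpr{L}$ as the signed sum $\sum_{V \subseteq \negpath{L}} (-1)^{\abs{V}} \freq{\pospath{L} \cup V}$. Every itemset $\pospath{L} \cup V$ occurring here, as $V$ ranges over subsets of $\negpath{L}$, is exactly a member of the first family $\set{V \cup \pospath{L} \mid V \subseteq \negpath{L}}$ in the definition of $\sets{L}$. Hence $q_D\fpr{L}$ is computable from the frequencies of that first family.

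For the numerator I would rerun the same inclusion-exclusion argument that yields Eq.~\ref{eq:itemsets}, now carrying the extra positive constraint $a_t = 1$ through every term. The event ``$L$ is chosen and $a_t = 1$'' is the conjunction of $\pospath{L} = 1$, of $a = 0$ for each $a \in \negpath{L}$, and of $a_t = 1$; expanding the negative constraints against their complements gives $q_D\fpr{a_t = 1, L} = \sum_{V \subseteq \negpath{L}} (-1)^{\abs{V}} \freq{\set{a_t} \cup \pospath{L} \cup V}$, whose itemsets are exactly the second family $\set{V \cup \pospath{L} \cup \set{a_t} \mid V \subseteq \negpath{L}}$ of $\sets{L}$. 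Combining the two expansions, the ratio $q_D\fpr{a_t = 1 \mid L}$, its complement, and thus the entire coding table are computable from the frequencies of $\sets{L}$, which is the claim. The only point requiring care — and the closest thing to an obstacle, though it is minor — is justifying that the positive constraint $a_t = 1$ commutes with the inclusion-exclusion over $\negpath{L}$ and that $\set{a_t} \cup \pospath{L} \cup V$ is a genuine itemset; both are immediate because $a_t = \target{T}$ does not lie in $\source{T}$ and so never appears on the path, leaving the constraints as fixed conditions on pairwise disjoint attribute sets.
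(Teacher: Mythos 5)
Your proposal is correct and follows essentially the same route as the paper: reduce the coding table to the single conditional probability $q_D\fpr{a_t = 1 \mid L}$ via Eq.~\ref{eq:coding}, apply the inclusion-exclusion expansion of Eq.~\ref{eq:itemsets} to the denominator, and repeat the same expansion with $a_t$ adjoined as a positive item for the numerator, which is precisely what the paper means by computing $q_D\fpr{a_t = 1, L}$ ``in a similar fashion.'' Your added remark that $a_t$ cannot appear on the path (since the tree only branches on other, previously transmitted attributes) is a small but legitimate point of care that the paper leaves implicit.
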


\begin{example}
Let $L_1$, $L_2$, and $L_3$ be the leaves (from left to right) of $T_4$ in
Figure~\ref{fig:extree:d}. Then the corresponding families of itemsets are
$\sets{L_1} = \set{a, ac}$,
$\sets{L_2} = \set{b, ab, bc, abc}$, and
$\sets{L_3} = \set{\emptyset, a, b, ab, c, ac, bc, abc}$.
\end{example}

We can easily see that
the family $\sets{L}$ is essentially the smallest family of itemsets from which
the coding table can be derived uniquely.
\begin{proposition}
Let $\ifam{G} \neq \sets{L}$ be a family of itemsets. Then there are two data
sets, say $D_1$ and $D_2$, for which $q_{D_1}\fpr{a_t = 1 \mid L} \neq
q_{D_2}\fpr{a_t = 1 \mid L}$ but $\freq{\ifam{G}; D_1} = \freq{\ifam{G}; D_2}$.
\end{proposition}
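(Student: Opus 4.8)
The plan is to reduce the claim to a linear–independence fact about itemset frequencies and then produce the two datasets by perturbing a fixed base distribution. Throughout I view a dataset through its empirical distribution $q_D$ as a vector in $\real^{\set{0,1}^K}$, and I observe that every frequency is a linear functional of this vector: writing $f_X\fpr{q} = \sum_{v \supseteq \mathbf 1_X} q\fpr{A = v}$ we have $\freq{X} = f_X\fpr{q_D}$. By Eq.~\ref{eq:coding} the coding table of $L$ is determined by the pair $\alpha\fpr{q} := q\fpr{a_t = 1, L}$ and $\beta\fpr{q} := q\fpr{L}$ through the ratio $\alpha/\beta$, and by Eq.~\ref{eq:itemsets} both $\alpha$ and $\beta$ are signed sums of the functionals $f_X$ with $X$ ranging over $\sets{L}$, each itemset of $\sets{L}$ contributing exactly once with coefficient $\pm 1$. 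So it suffices to exhibit, whenever some nonempty $W \in \sets{L}$ is missing from $\ifam{G}$, two distributions that agree on $\set{f_G : G \in \ifam{G}}$ but disagree on $\alpha/\beta$. (The empty set is the harmless exception hidden in ``essentially the smallest'', since $\freq{\emptyset} = 1$ always, and a family containing all of $\sets{L}$ trivially determines the table; the content of the claim is thus exactly the case $\sets{L} \not\subseteq \ifam{G}$.)

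The key ingredient is that $\set{f_X : X \subseteq A}$ is a basis of the dual of $\real^{\set{0,1}^K}$: the matrix $\spr{\mathbf 1_X \subseteq v}$ is the zeta matrix of the Boolean lattice and is invertible (M\"obius inversion). Consequently a functional that is a combination of the $f_X$ lies in $\operatorname{span}\set{f_G : G \in \ifam{G}}$ if and only if every itemset appearing with a nonzero coefficient already belongs to $\ifam{G}$.

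Next I would fix the uniform distribution $q_0$, so that $\alpha_0 := \alpha\fpr{q_0} > 0$ and $\beta_0 := \beta\fpr{q_0} > 0$, and define the separating functional $C := \beta_0\, \alpha - \alpha_0\, \beta$. A short computation shows that, for any perturbation $\delta$ with $f_G\fpr{\delta} = 0$ for all $G \in \ifam{G}$, the value of $\alpha/\beta$ differs at $q_0$ and at $q_0 + \delta$ precisely when $C\fpr{\delta} \neq 0$, since by linearity equality of the two ratios is equivalent to $\alpha\fpr{\delta}\beta_0 = \beta\fpr{\delta}\alpha_0$. Expanding $C$ through Eq.~\ref{eq:itemsets}, it is a combination of the $f_X$ over $X \in \sets{L}$ in which every coefficient equals $\pm\alpha_0$ or $\pm\beta_0$ and is therefore nonzero. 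In particular $C$ has a nonzero coefficient on $f_W$ with $W \notin \ifam{G}$, so by the basis fact $C \notin \operatorname{span}\set{f_G : G \in \ifam{G}}$; equivalently $C$ does not vanish on the annihilator $U = \set{\delta : f_G\fpr{\delta} = 0 \text{ for all } G \in \ifam{G}}$. I then pick $\delta \in U$ with $C\fpr{\delta} \neq 0$.

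Finally I would realize this perturbation as genuine data. Adjoining the constraint $f_\emptyset\fpr{\delta} = 0$ to $U$ keeps $C$ outside the span (this is where $W \neq \emptyset$ is used) and makes $\delta$ preserve total mass; scaling to $\epsilon\delta$ with $\epsilon$ a small rational keeps $q_0 \pm \epsilon\delta$ strictly positive and lets both be empirical distributions of finite datasets of a common size. Taking $D_1 = q_0$ and $D_2 = q_0 + \epsilon\delta$ then gives $\freq{\ifam{G}; D_1} = \freq{\ifam{G}; D_2}$ while $C\fpr{\epsilon\delta} \neq 0$ forces $q_{D_1}\fpr{a_t = 1 \mid L} \neq q_{D_2}\fpr{a_t = 1 \mid L}$. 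The main obstacle is the middle step: building $C$ and proving that it escapes the span of the preserved frequencies. Once that is in hand, both the existence of $\delta$ and its realization as integer-valued datasets are routine, and the only delicate bookkeeping is the empty-set exception and the trivial superset case noted above.
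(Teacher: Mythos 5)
Your proposal cannot be compared against a proof in the paper, because the paper offers none: the proposition is asserted bare, prefaced only by ``We can easily see that the family $\sets{L}$ is essentially the smallest family of itemsets from which the coding table can be derived uniquely.'' Your argument therefore fills a real gap, and it is correct. Two points deserve emphasis. First, you rightly notice that the statement as literally written is false --- any $\ifam{G}$ containing all of $\sets{L}$, or equal to $\sets{L}$ with only $\emptyset$ removed, still determines the coding table since $\freq{\emptyset} = 1$ always --- and that the word ``essentially'' is hiding exactly this; your working hypothesis, that some \emph{nonempty} $W \in \sets{L}$ is missing from $\ifam{G}$, is the correct reading (your parenthetical ``exactly the case $\sets{L} \not\subseteq \ifam{G}$'' is marginally imprecise, since it should be $\sets{L} \setminus \set{\emptyset} \not\subseteq \ifam{G}$, but your proof only ever uses the nonempty-$W$ form, so nothing breaks). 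Second, the core of the argument is sound and well put together: the functionals $f_X$ are linearly independent because the zeta matrix of the Boolean lattice is triangular with unit diagonal, hence invertible; the separating functional $C = \beta_0\alpha - \alpha_0\beta$ has nonzero coefficient ($\pm\alpha_0$ or $\pm\beta_0$) on \emph{every} member of $\sets{L}$ because the $\alpha$-part (itemsets containing $a_t$) and the $\beta$-part (itemsets not containing $a_t$) of $\sets{L}$ are disjoint, so no cancellation is possible; consequently $C$ lies outside the span of $\set{f_G : G \in \ifam{G}} \cup \set{f_\emptyset}$, and the standard annihilator duality yields a mass-preserving $\delta$ with $C\fpr{\delta} \neq 0$. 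The steps you defer as routine are indeed routine: the annihilator is cut out by integer equations and $C$ has rational coefficients, so a rational $\delta$ exists; for small rational $\epsilon$ both $q_0$ and $q_0 + \epsilon\delta$ are strictly positive rational distributions, hence empirical distributions of finite datasets, with $q_0\fpr{L} > 0$ and $\fpr{q_0 + \epsilon\delta}\fpr{L} > 0$ so that both conditional probabilities are defined and the cross-multiplication step is valid.
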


Given a tree $T$ we define $\sets{T}$ to be
$\sets{T} = \bigcup_{L \in \leaves{T}} \sets{L}$.
We also define $\sets{\mathcal{T}} = \bigcup_i \sets{T_i}$ where $\mathcal{T} =
\enset{T_1}{T_K}$ is a decision tree model.

\section{Choosing Good Itemsets}\label{section:choose}

The connection between itemsets and decision trees made
in the previous section allows us to consider an orthogonal
approach to identify good itemsets.
Informally, our goal is to construct decision trees
from a family of itemsets $\ifam{F}$, selecting the subset 
from $\ifam{F}$ that provides the best compression of the data.
More formally, our new approach is as follows:
given a downward closed family of itemsets
$\ifam{F}$, we build a decision tree model $\mathcal{T} =
\enset{T_1}{T_K}$ providing a good compression of the data, with 
$\sets{\mathcal{T}} \subseteq \ifam{F}$.

Before we can describe our main algorithm, we need to introduce
some further notation. 
Firstly, given two trees $T_p$ and $T_n$ not using attribute $c$, we define
$\textsc{JoinTree}\fpr{c, T_p, T_n}$ to be the join tree with $c$ as the root
node, $T_p$ as the positive branch of $c$, and $T_n$ as the negative branch of
$c$. 
Secondly, to define our search algorithm we need to find the best tree
\[
\begin{split}
\besttree{a_t; S, \ifam{F}} = &\arg\min_T \left\{\cost{T} \mid \target{T} = a_t, \right. \\
& \quad \left.\source{T} \subseteq S, \sets{T} \subseteq \ifam{F}\right\},
\end{split}
\]
that is, $\besttree{a_t; S, \ifam{F}}$, returns the best tree for $a_t$
for which the related sets are in $\ifam{F}$ and only splits on attributes in $S$.

To compute the optimal tree $\besttree{a_t; S, \ifam{F}}$, we
use the exhaustive method (presented originally in~\cite{nijssen07decision}) given in
Algorithm~\ref{alg:generate}. The algorithm is straightforward:
it tests each valid item as the root 
and recurses itself on both branches.

\begin{algorithm}[ht!]
\caption{\textsc{Generate} algorithm for calculating $\besttree{a_t; S,
\ifam{F}}$, that is, the best tree $T$ for $a_t$ using only $S$ as source and
having $\sets{T} \subseteq \ifam{F}$.}
\label{alg:generate}
\begin{algorithmic}[1]
\STATE $B \define S \cap \pr{\bigcup \ifam{F}}$.
\STATE $\mathcal{C} \define \textsc{TrivialTree}\fpr{a_t}$.
\FOR{$b \in B$}
\STATE $\ifam{G} \define \set{X - b \mid b \in X \in \ifam{F}}$.
\STATE $(D_p, D_n) \define \textsc{Split}\fpr{D, b}$.
\STATE $T_p \define \textsc{Generate}\fpr{a_t, \ifam{G}, S, D_p}$.
\STATE $T_n \define \textsc{Generate}\fpr{a_t, \ifam{G}, S, D_n}$.
\STATE $\mathcal{C} \define \mathcal{C} \cup \textsc{JoinTree}\fpr{b, T_p, T_n}$.
\ENDFOR
\RETURN $\arg \min_T \set{\cost{T} \mid T \in \mathcal{C}}$.
\end{algorithmic}
\end{algorithm}

We can now describe the actual algorithm for constructing decision tree models with a low cost.
Our method automatically discovers the order in which the attributes
can be transmitted most succinct. For this, it needs to find sets of attributes
$S_i$ for each attribute $a_i$ such that these should be encoded before $a_i$. 
The collection $\mathcal{S} = \enset{S_1}{S_K}$ should define an acyclic
graph and the actual trees are $\besttree{a_i; S_i, \ifam{F}}$. 
We use $\cost{\mathcal{S}}$ as a shorthand for the total complexity 
$\sum_i \cost{\besttree{a_i; S_i, \ifam{F}}}$ of the best model built from 
$\mathcal{S}$.

We construct the set $\mathcal{S}$ iteratively. At the beginning of the
algorithm we have $S_i = \emptyset$ and we increase the sets $S_i$ one
attribute at a time. We allow ourselves to mark the attributes. The idea is
that once the attribute $a_i$ is marked, then we are not allowed to augment
$S_i$ any longer.  At the beginning none of the nodes are marked.  

To describe a single step in the algorithm we consider a graph $H =
\enpr{v_0}{v_K}$, where $v_1, \ldots, v_K$ represent the attributes and $v_0$
is a special auxiliary node. 
We start by adding edges $\pr{v_i, v_0}$ having the weight 
$\cost{\besttree{a_i; S_i, \ifam{F}}}$, thus the cost of the best
tree possible from $\ifam{F}$ using only the attributes in $S_i$.
Then, for each unmarked node $v_i$ we find out what other
extra attribute will help most to encode it succinct. 
To do this, we add the edge $\pr{v_i, v_j}$ for each $v_j$ with the weight $\cost{\besttree{a_i; S_i \cup \set{a_j}, \ifam{F}}}$. 
Now, let $U$ be the minimum directed spanning tree of $H$ having $v_0$ as the sink. 
Consider an unmarked node $v_i$ such that $\pr{v_i, v_0} \in E\fpr{U}$. 
That node is now the best choice to be fixed, as it helps to encode
the data best. We therefore mark attribute $a_i$ and add $a_i$ to
each $S_j$ for each ancestor $v_j$ of $v_i$ in $U$. 
This process is repeated until all attributes are marked. 
The details of the algorithm are given in Algorithm~\ref{alg:choose}.

\begin{algorithm}[htb!]
\caption{The algorithm \textsc{SetPack} constructs a decision tree model
$\mathcal{T}$ given a family of itemsets $\ifam{F}$ such that
$\sets{\mathcal{T}} \subseteq \ifam{F}$. Returns a DAG, a family $S =
\enpr{S_1}{S_K}$ of sets of attributes. The trees are
$T_i = \besttree{a_i, S_i, \ifam{F}}$.}
\label{alg:choose}
\begin{algorithmic}[1]
\STATE $\mathcal{S} = \enpr{S_1}{S_K} \define \enpr{\emptyset}{\emptyset}$.
\STATE $r = \enpr{r_1}{r_K} \define \enpr{\FALSE}{\FALSE}$.
\STATE $V \define \enset{v_0}{v_K}$.
\WHILE{there exists $r_i = \FALSE$}
\STATE $E \define \emptyset$.
\FOR{$i = 1, \ldots, K$}
\STATE $E \define E \cup \pr{v_i, v_0}$.
\STATE $w\fpr{v_i, v_0} \define \cost{\besttree{a_i; S_i, \ifam{F}}}$.
\IF{$r_i = \FALSE$}
\FOR{$j = 1, \ldots, K$}
\STATE $T \define \besttree{a_i; S_i \cup \set{a_j}, \ifam{F}}$.
\IF{$\cost{T} \leq w\fpr{v_i, v_0}$}
\STATE $E \define E \cup \pr{v_i, v_j}$, $w\fpr{v_i, v_j} \define \cost{T}$.
\ENDIF
\ENDFOR
\ENDIF
\ENDFOR
\STATE $U \define dmst\fpr{V, E}$ \COMMENT{Directed Min. Spanning Tree.}
\FOR{$\pr{v_i, v_0} \in E\fpr{U}$ \textbf{and} $r_i = \FALSE$}
\STATE $r_i \define \TRUE$.
\FOR{$v_j$ is a parent of $v_i$ in $U$}
\STATE $S_j \define S_j + a_i$.
\ENDFOR
\ENDFOR
\ENDWHILE
\RETURN $S$.
\end{algorithmic}
\end{algorithm}

The marking of the attributes guarantees that there can be no cycles in
$\mathcal{S}$. In fact, the marking order also tells us a valid order
for transmitting the attributes. Further, as at least one attribute is
marked at each step, this guarantees that the algorithm 
terminates in $K$ steps.

Let $\mathcal{S}$ be the collection of sources. 
The following proposition tells
us that the augmentation performed by \textsc{SetPack} 
does not compromise the optimality of collections next to $\mathcal{S}$.

\begin{proposition}
Assume the collection of sources $\mathcal{S} = \enset{S_1}{S_K}$. Let
$\mathcal{O} = \enset{O_1}{O_K}$ be the collection of sources 
such that $S_i \subseteq O_i$ and
$\abs{O_i} \leq \abs{S_i} + 1$. Let $\mathcal{S}'$ be the collection that
Algorithm~\ref{alg:choose} produces from $\mathcal{S}$ in a single step. Then
there is a collection $\mathcal{S}^*$ such that $S'_i \subseteq S^*_i$ and that
$\cost{\mathcal{S}^*} \leq \cost{\mathcal{O}}$.
\end{proposition}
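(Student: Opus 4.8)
The plan is to read a single step of Algorithm~\ref{alg:choose} as the computation of a minimum directed spanning tree on $H$, and to view both $\mathcal{S}'$ and the competitor $\mathcal{O}$ as feasible directed spanning trees of the \emph{same} graph, so that minimality of the tree does the work. Throughout write $c_i\fpr{X} = \cost{\besttree{a_i; X, \ifam{F}}}$, so the weights of $H$ are $w\fpr{v_i, v_0} = c_i\fpr{S_i}$ and, when present, $w\fpr{v_i, v_j} = c_i\fpr{S_i \cup \set{a_j}}$; recall the algorithm inserts $\pr{v_i, v_j}$ only when $v_i$ is unmarked and $c_i\fpr{S_i \cup \set{a_j}} \le c_i\fpr{S_i}$. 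I will assume, as is implicit in comparing against a neighbour of the current $\mathcal{S}$, that $\mathcal{O}$ freezes the already marked coordinates, i.e.\ $O_i = S_i$ whenever $a_i$ is marked, and that $\mathcal{O}$ is a valid (acyclic) collection of sources.

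First I would turn $\mathcal{O}$ into a feasible directed spanning tree $U'$ of $H$ with $w\fpr{U'} \le \cost{\mathcal{O}}$. For each $i$, if $O_i = S_i \cup \set{a_j}$ and the edge $\pr{v_i, v_j}$ is present in $H$, let $U'$ use it; otherwise let $U'$ use $\pr{v_i, v_0}$. Every non-sink node then has out-degree exactly one and $v_0$ is the unique sink. Acyclicity is the crucial point: a cycle must avoid $v_0$ and hence consists solely of edges $\pr{v_i, v_j}$ with $a_j \in O_i$, which would be a cycle in the dependency graph of $\mathcal{O}$, contradicting acyclicity; so $U'$ is a genuine spanning tree. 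For the weight, a node routed along its $\mathcal{O}$-edge contributes exactly $c_i\fpr{O_i}$, while a node rerouted to $v_0$ contributes $c_i\fpr{S_i} \le c_i\fpr{O_i}$; under the frozen-marking assumption the only absent edges we are forced to reroute are the non-beneficial ones, for which this inequality points the right way. Summing gives $w\fpr{U'} \le \cost{\mathcal{O}}$, and since $U$ is a minimum directed spanning tree, $w\fpr{U} \le w\fpr{U'} \le \cost{\mathcal{O}}$.

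Next I would exhibit $\mathcal{S}^*$. In $U$ every node $v_i$ with $i \ge 1$ has a single out-edge; let $t\fpr{i}$ be its target. Define $S^*_i = S_i \cup \set{a_{t\fpr{i}}}$ when $t\fpr{i} \ge 1$ and $S^*_i = S_i$ when $t\fpr{i} = 0$. Because each node contributes exactly one edge, $c_i\fpr{S^*_i}$ equals the weight of that edge, whence $\cost{\mathcal{S}^*} = \sum_i c_i\fpr{S^*_i} = w\fpr{U} \le \cost{\mathcal{O}}$. It remains to check $S'_i \subseteq S^*_i$. A single step of the algorithm appends $a_i$ to $S_j$ exactly for the tree edges $\pr{v_j, v_i} \in E\fpr{U}$ whose head $v_i$ is marked during this step; that is, $S'_j$ augments $S_j$ by $a_{t\fpr{j}}$ only under the extra condition that $v_{t\fpr{j}}$ is freshly marked, whereas $S^*_j$ appends $a_{t\fpr{j}}$ unconditionally. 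Hence $S'_j \subseteq S^*_j$ for every $j$, which finishes the proof.

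The main obstacle is the translation of $\mathcal{O}$ into $U'$ in the second paragraph: one must verify that the augmentation edges never close a cycle — which is exactly where acyclicity of $\mathcal{O}$ is used — and that every edge $U'$ is forced to use lies in $H$ with weight no larger than the matching term of $\cost{\mathcal{O}}$. The delicate case is an $\mathcal{O}$-augmentation that the algorithm omitted from $H$: a non-beneficial one is harmless because rerouting to $v_0$ only decreases the weight, but an augmentation of an \emph{already marked} attribute could in principle be beneficial and cheaper than $c_i\fpr{S_i}$, which is precisely why the comparison must keep marked coordinates frozen. Once these feasibility and monotonicity checks are settled, minimality of the directed spanning tree supplies the remaining inequality.
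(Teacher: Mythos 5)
Your proof is correct and follows essentially the same route as the paper's: a single step of \textsc{SetPack} is read as a minimum directed spanning tree computation on $H$, the competitor $\mathcal{O}$ is converted into a feasible spanning tree of weight at most $\cost{\mathcal{O}}$, and $\mathcal{S}^*$ is read off the optimal tree $U$, giving $\cost{\mathcal{S}^*} = w\fpr{U} \leq \cost{\mathcal{O}}$ together with $S'_i \subseteq S^*_i$. If anything, you are more explicit than the paper about the points its proof glosses over (the restriction that $\mathcal{O}$ agrees with $\mathcal{S}$ on already-marked attributes, acyclicity of the constructed competitor tree, and the rerouting of edges absent from $H$), so no correction is needed.
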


\begin{proof}
Let $G$ be the graph constructed by Algorithm~\ref{alg:choose} for the
collection $S$. Construct the following graph $W$: For each $O_i$ such that
$O_i = S_i$ add the edge $\pr{v_i, v_0}$. For each $O_i \neq S_i$ add the edge
$\pr{v_i, v_j}$, where $\set{a_j} = O_i - S_i$. But $W$ is a directed spanning
tree of $G$. Let $U$ be the directed minimum spanning tree returned by the
algorithm. Let $S^*_i = S'_i$ if $\pr{v_i, v_0} \in E\fpr{U}$ and $S^*_i = S'_i
\cup \set{a_j}$ if $\pr{v_i, v_j} \in E\fpr{U}$. Note that $\mathcal{S}^*$
defines a valid model and because $U$ is optimal we must have
$\cost{\mathcal{S}^*} \leq \cost{\mathcal{O}}$.
\end{proof}

\begin{corollary}
Assume that $\ifam{F}$ is a family of itemsets having 2 items, at maximum.
The algorithm \textsc{SetPack} returns the optimal tree model.
\end{corollary}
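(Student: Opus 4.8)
The plan is to convert the statement into a purely combinatorial one by exploiting the size bound on $\ifam{F}$, and then to invoke the preceding proposition as the inductive step. The first observation is that the hypothesis forces every admissible tree to be almost trivial. Indeed, for a leaf $L$ at depth $d$ the family $\sets{L}$ contains the itemset $\pospath{L} \cup \negpath{L} \cup \set{a_t}$, obtained by taking $V = \negpath{L}$ in the second half of the definition of $\sets{L}$; since a tree never tests its own target, this union is disjoint and has size $d + 1$. The requirement $\sets{T} \subseteq \ifam{F}$ therefore gives $d + 1 \leq 2$, i.e. $d \leq 1$, so every admissible tree is either trivial or splits exactly once, and hence $\abs{\source{T}} \leq 1$. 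In particular any optimal model $\mathcal{O}^* = \enset{O_1^*}{O_K^*}$ can be taken with $\abs{O_i^*} \leq 1$ for every $i$.

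Next I would set up the induction. Writing $\mathcal{S}^{(t)}$ for the collection of sources held after $t$ passes of the while-loop, the invariant I would maintain is: there exists a \emph{valid optimal} model $\mathcal{O}^{(t)}$ with $S_i^{(t)} \subseteq O_i^{(t)}$ and $\abs{O_i^{(t)}} \leq \abs{S_i^{(t)}} + 1$ for all $i$, and with $O_i^{(t)} = S_i^{(t)}$ whenever $a_i$ is marked. The base case $t = 0$ is exactly the structural observation above: $\mathcal{S}^{(0)} = \enpr{\emptyset}{\emptyset}$ carries no marks, and the optimum $\mathcal{O}^*$ satisfies $\emptyset \subseteq O_i^*$ and $\abs{O_i^*} \leq 1 = \abs{\emptyset} + 1$. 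This is the only place where the ``two items at most'' hypothesis is used.

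The inductive step \emph{is} the preceding proposition. Applying it to $\mathcal{S} = \mathcal{S}^{(t)}$ and $\mathcal{O} = \mathcal{O}^{(t)}$ — whose hypotheses $S_i^{(t)} \subseteq O_i^{(t)}$ and $\abs{O_i^{(t)}} \leq \abs{S_i^{(t)}} + 1$ are precisely the invariant — produces a valid collection $\mathcal{S}^*$ with $S_i^{(t+1)} \subseteq S_i^*$ and $\cost{\mathcal{S}^*} \leq \cost{\mathcal{O}^{(t)}}$. As the right-hand side already equals the global minimum, $\mathcal{S}^*$ is itself optimal, so I set $\mathcal{O}^{(t+1)} = \mathcal{S}^*$. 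The containment $S_i^{(t+1)} \subseteq O_i^{(t+1)}$ and the bound $\abs{O_i^{(t+1)}} \leq \abs{S_i^{(t+1)}} + 1$ are immediate from the construction $S_i^* \in \set{S_i^{(t+1)}, S_i^{(t+1)} \cup \set{a_j}}$, so the invariant is self-reproducing and no further appeal to the size bound is needed. For the marking clause I would note that a node marked at this step has $\pr{v_i, v_0} \in E\fpr{U}$, in which case the construction sets $S_i^* = S_i^{(t+1)}$; a node marked earlier has $r_i = \true$, so the guarded inner loop creating the edges $\pr{v_i, v_j}$ is skipped and its only edge in $H$ is $\pr{v_i, v_0}$, whence again $\pr{v_i, v_0} \in E\fpr{U}$ and $S_i^* = S_i^{(t+1)}$. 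Hence $O_i^{(t+1)} = S_i^{(t+1)}$ for every marked $a_i$.

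Since at least one attribute is marked per pass, the algorithm halts within $K$ passes with every $a_i$ marked; the marking clause then forces $O_i = S_i$ for all $i$, so the returned collection coincides with the optimal model $\mathcal{O}$ and has minimal cost. I expect the genuine obstacle to be the opening reduction: one must read the exact description of $\sets{L}$ to see that the admissible depth is pinned to at most one, and then recognise that the two-item bound is needed \emph{only} to launch the induction, after which the proposition regenerates its own hypotheses. A secondary point to handle carefully is that the sets $S_i$ may legitimately grow beyond a single attribute during the run; this does not break the invariant, because it controls $\abs{O_i} - \abs{S_i}$ rather than $\abs{O_i}$ itself, and the extra, unused attributes never push the optimal cost below the true minimum.
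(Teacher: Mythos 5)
Your proposal is correct, and it follows exactly the route the paper intends: the paper states this corollary without any explicit proof, as an immediate consequence of the preceding proposition, and your induction — base case from the observation that $\sets{L} \ni \pospath{L} \cup \negpath{L} \cup \set{a_t}$ pins every admissible tree to depth at most one, inductive step by the proposition — is that argument carried out in full. The only thing you add beyond the paper is necessary bookkeeping the proposition's bare statement omits (the bound $\abs{S^*_i} \leq \abs{S'_i}+1$ and the clause $O_i = S_i$ for marked attributes, both read off from the proposition's proof construction); this strengthening is genuinely needed to make the induction self-sustaining and to close the argument at termination, and you identify it correctly.
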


Let us consider the complexity of the algorithms. The algorithm
\textsc{SetPack} runs in a polynomial time. By using dynamic programming
we can show that \textsc{Generate} runs in $O(\abs{F}^2)$ time.
We also tested a faster variant of the algorithm in which the exhaustive
search in \textsc{Generate} is replaced by the greedy approach similar
to the \textsc{ID3} algorithm. We call this variant \textsc{SetPackGreedy}.

\section{Related Work}\label{section:related} Finding interesting itemsets is a
major research theme in data mining. To this end, many measures have been suggested over
time.  A classic measure for ranking itemsets is frequency, for which
there exist efficient search algorithms~\cite{agrawal96,han00mining}.  Other
measures involve comparing how much an itemset deviates from the
independence assumption~\cite{brin97beyond,aggarwal98new,dumouchel01empirical,brin97itemsets}.
In yet other approaches more flexible models are used, such as, Bayes
networks~\cite{jaroszewicz04bayes,jaroszewicz05bayes}, Maximum Entropy
estimates~\cite{meo00dependence,tatti08rank}.  Related are also low-entropy
sets: itemsets for which the entropy of the data is low~\cite{heikinheimo07entropy}.

Many of these approaches suffer from the fact that they require a user-defined
threshold and further that at low thresholds extremely many
itemsets are returned, many of which convey the same information. 
To address the latter
problem we can use closed~\cite{pasquier99discovering} or
non-derivable~\cite{calders02mining} itemsets that provide a concise
representation of the original itemsets. However, these methods
deteriorate even under small amounts of noise.

Alternative to these approaches of describing the pattern set, 
there are methods that instead pick groups of itemsets that describe the data well. 
As such, we are not the first to embrace the compression approach to data
mining~\cite{faloutsos07}.
Recently, Siebes et al.~\cite{siebes06} introduced the MDL-based {\sc Krimp}
algorithm to battle the frequent itemset explosion at low support thresholds.
It returns small subsets of itemsets that together capture the
distribution of the data well.  These {\em code tables} have been
successfully applied in classification~\cite{leeuwen06}, 
measuring the dissimilarity of data~\cite{vreeken07a}, and
data generation~\cite{vreeken07b}.
While these applications shows the practicality of the approach, {\sc Krimp}
can only describe the patterns between the items that are present in the
dataset.  On the other hand, we consider the $0$s and the
$1$s in the data symmetrically and hence we are able to provide more
detailed descriptions of the data; including patterns between the presence and
absence of items.

More different from our methods are the lossy data description approaches.  These
strive to describe just part of the data, and as such may overlook important
interactions.  Summarization~\cite{summarization} is a compression approach
that identifies a group of itemsets such that each transaction is summarized by
one set with as little loss of information as possible.  Yet different 
are \emph{pattern teams}~\cite{knobbe06b}, which are groups of
most-informative length-$k$ itemsets~\cite{knobbe06a}, selected through an
external interestingness measure.  As this approach is computationally
intensive, the number of team members is typically $<10$.
Bringmann et al.~\cite{bringmannZ07} proposed a similar selection method
that can consider larger pattern sets. However, it also requires the user to
choose a quality measure to which the pattern set has to be optimized, unlike
our parameter-free and lossless method.

Alternatively we can view the approach in this paper as building a global model
for data and then selecting the itemsets that describe the model. This
approach then allows us to use MDL as a model selection technique. In a related
work~\cite{tatti08partition} the authors build decomposable models in order to
select a small family of itemsets that model the data well.

The decision trees returned by our methods, and particularly the DAG that they
form, have a passing resemblance to Bayes networks~\cite{Bayes}. However, as
both the model construction and complexity weighing differ strongly, so do the
outcomes. To be more precise, in our case the distributions $p\fpr{x,
par\fpr{x}}$ are modeled and weighted via decision trees whereas in the Bayes
network setup any distribution is weighted equally. Furthermore, we use the
correspondence between the itemsets and the decision trees to output local
patterns, as opposed to Bayes networks which are traditionally used as global
models.

\section{Experiments}\label{section:experiments}

This section contains the results of the empirical evaluation of
our methods using toy and real datasets.

\subsection{Datasets}

For the experimental validation of the two packing
strategies we use a group of datasets with strongly
differing statistics.
From the LUCS/KDD repository~\cite{coenen03} we took a number of 
often used databases to allow for comparison to other methods.
To test our methods on real data we used the Mammals presence database
and the Helsinki CS-courses dataset. 
The latter contains the enrollment records of 
students taking courses at the Department of Computer Science
of the University of Helsinki. 
The {\em mammals} dataset consists of the absence/presence of European
mammals~\cite{mitchell-jones99} in geographical areas of 50x50 kilometers.\footnote{The full version of the dataset is available for research purposes upon request, \url{http://www.european-mammals.org}.}
The details of these datasets are provided in Table~\ref{table:datasets}.

\begin{table}[htbp!]
\begin{center}
\caption{Statistics of the datasets used in the experiments.}\label{table:datasets}
\begin{tabular}[b]{lrrr}
\toprule
{\em Dataset} & {$|D|$} & {$K$} & {\% of 1's} \\
\midrule
{\em anneal} & 898 & 71 & 20.1 \\
{\em breast} & 699 & 16 & 62.4\\
{\em courses} & 3506 & 98 & 4.6\\
{\em mammals} & 2183 & 40 & 46.9 \\
{\em mushroom} & 8124 & 119 & 19.3 \\
{\em nursery} & 12960 & 32 & 28.1 \\
{\em pageblocks} & 5473 & 44 & 25.0 \\
{\em tic--tac--toe} & 958 & 29 & 34.5 \\
\bottomrule
\end{tabular}
\end{center}
\end{table}

\subsection{Experiments with Toy Datasets}

To evaluate whether our method correctly identifies
(in)dependencies, we start our experimentation using two
artificial datasets of 2000 transactions and 10 items. 
For both databases, the data is generated per transaction,
and the presence of the first item is based on a fair
coin toss. For the first database, the other items are
similarly generated. However, for the second database, 
the presence of an item is 90\% dependent on the previous item. 
As such, both datasets have item densities of about 50\%.

% table:greedy
\begin{table*}[t!]
\begin{center}
\caption{Compression, number of trees and numbers of extracted itemsets for the greedy algorithm.}\label{table:greedy}
\begin{tabular}[b]{l rrrrr@{\hspace{3mm}}c@{\hspace{3mm}}rrrr}
\toprule
& \multicolumn{5}{l}{\sc GreedyPack} && \multicolumn{4}{l}{\sc Krimp} \\
\cmidrule{2-6} \cmidrule{8-11}
{\em Dataset} &
{$\cost{\mathcal{T}_b}$ (bits)} & {$\cost{\mathcal{T}}$ (bits)} & {$\frac{\cost{\mathcal{T}}}{\cost{\mathcal{T}_b}} (\%)$} & 
{\em \# trees} & {\em \# sets} &&
{\em min--sup} & {\em \# sets} & {\em \# bits} & {\em ratio (\%)} \\
\midrule
{\em anneal} & 23104 & 12342 & {\bf 53.4} & {71} & {1203}       && {1} & {102} & {22154} & {34.6} \\
{\em breast} & 8099 & 2998 & {\bf 37.0} & {16} & {17}           && {1} & {30} & {4613} & {16.9} \\
{\em courses} & 76326 & 61685 & {\bf 80.8} & {98} & {1230}      && {2} & {148} & {71019} & {79.3} \\
{\em mammals} & 78044 & 50068 & {\bf 64.2} & {40} & {845}       && {200} & {254} & {90192} & {42.3} \\
{\em mushroom} & 442062 & 115347 & {\bf 26.1} & {119} & {999}   && {1} & {424} & {231877} & {20.9} \\
{\em nursery} & 337477 & 180803 & {\bf 53.6} & {32} & {3409}    && {1} & {260}  & {258898} & {45.5} \\
{\em pageblocks} & 15280 & 7611 & {\bf 49.8} & {44} & {219}     && {1} & {53}  & {10911} & {5.0} \\
{\em tic--tac--toe} & 25123 & 14137 & {\bf 56.3} & {29} & {619} && {1} & {162} & {28812} & {62.3} \\
\bottomrule
\end{tabular}
\vspace*{-0.5mm}
\end{center}
\end{table*}

If we apply \textsc{GreedyPack}, our greedy decision tree building method,
to these datasets we see that it is unable to compress
the independent database at all. Opposing, the dependently
generated dataset can be compressed into only 50\%
of the original number of bits. Inspection of 
the resulting itemsets show that the resulting model
correctly describes the dependencies in detail: 
The resulting $19$ itemsets are $\left\{a_1,\ldots,a_{10}, a_1a_2, \ldots, a_9a_{10}\right\}$.

\subsection{The Greedy Method}

Recall that our goal is to find high quality 
descriptions of the data. Following the MDL principle, 
the quality of the found descriptions can objectively
be measured by the compression of the data. We present the compressed sizes for \textsc{GreedyPack} in Table~\ref{table:greedy}.
The encoding costs $\cost{\mathcal{T}}$ include the size of the 
encoded data and the decision trees.
The initial costs, as denoted by $\cost{\mathcal{T}_b}$, are those of 
encoding the data using na\"{i}ve single-node {\sc TrivialTree}s.
Each of these experiments required 1--10 seconds runtime,
with an exception of $60$s for {\em mushroom}.

From Table~\ref{table:greedy}, we see that all models
returned by \textsc{GreedyPack} strongly reduce the 
number of bits required to describe the data;
this implicitly shows that good models are returned. 
The quality can be gauged by taking the compression ratios into
account. In general, our greedy method reduces the 
number of bits to only half of what the independent
model requires.
As two specific examples of the found dependencies, in the \emph{courses}
dataset the course \emph{Data Mining} was packed using \emph{Machine Learning},
\emph{Software Engineering}, \emph{Information Retrieval Methods} and \emph{Data Warehouses}.
Likewise, \emph{AI} and \emph{Machine Learning} were used
to pack the \emph{Robotics} course. 

Like discussed above, our approach and the {\sc Krimp}~\cite{siebes06}
algorithm have stark differences in what part of the data 
is considered. However, as both methods use compression, and 
result good itemsets, it is insightful to compare
the algorithms. For the latter we here allow it 
to compress as well as possible, and thus, consider
candidates up to as low min-sup thresholds as feasible.

Let us compare between the outcomes of either method.
For {\sc Krimp} these are itemsets, for ours it is the
combination of the decision trees and the related itemsets.
We see that {\sc Krimp} typically returns
fewer itemsets than \textsc{GreedyPack}. However, 
our method returns itemsets that describe interactions
between both present {\em and} absent items. 

Next, we observed that especially the initial {\sc Krimp}
compression requires many more bits than ours, and as such
{\sc Krimp} attains better compression ratios. However,
if we disregard the ratios and look at the raw number of
bits the two methods require, we see that {\sc Krimp} generally
requires twice as many bits to describe {\em only} the 1's
in the data than \textsc{GreedyPack} does to represent {\em all} 
of the data.

\subsection{Validation through Classification}
To further assess the quality of our models we use 
a simple classification scheme~\cite{leeuwen06}.
First, we split the training database into separate 
class-databases. We pack each of these. 
Next, the class labels
of the unseen transactions were assigned according to
the model that compressed it best.

We ran these experiments for three databases,
viz. {\em mushroom}, {\em breast} and {\em anneal}.
A random 90\% of the data was used to train the models, 
leaving 10\% to test the accuracy on.
The accuracy scores we noted, resp. 100\%,
98.0\% and 93.4\%, are fully comparable to 
(and for the second, even better than) 
the classifiers considered in~\cite{leeuwen06}.

\subsection{Choosing Good Itemsets}

% table:choose
\begin{table*}[htb!]
\begin{center}
\caption{Compressed sizes and number of extracted itemsets for the itemset selection algorithms.}\label{table:choose}
\begin{tabular}[b]{l rr@{\hspace{3mm}}c@{\hspace{3mm}}rrr@{\hspace{3mm}}c@{\hspace{3mm}}rrr@{\hspace{3mm}}c@{\hspace{3mm}}rr}
\toprule
& 
\multicolumn{2}{l}{\em Candidate Itemsets} & &
\multicolumn{3}{l}{\sc SetPack} & &
\multicolumn{3}{l}{\sc SetPackGreedy} & &
\multicolumn{2}{l}{\sc Krimp} \\
\cmidrule{2-3}
\cmidrule{5-7}
\cmidrule{9-11}
\cmidrule{13-14}
{\em Dataset} &
{\em min-sup} &{\em \# sets} & &
{$\cost{\mathcal{T}}$} & {$\frac{\cost{\mathcal{T}}}{\cost{\mathcal{T}_b}} (\%)$} & {\em \# sets} &&
{$\cost{\mathcal{T}}$} & {$\frac{\cost{\mathcal{T}}}{\cost{\mathcal{T}_b}} (\%)$} & {\em \# sets} &&
{\em \# bits} & {\em \# sets} \\
\midrule
{\em anneal} & 
	{175} & {8837} &&
	{20777} & {\bf 89.9} & {103} &&
	{20781} & {\bf 89.9} & {69} &&
	{31196} & {53} \\
{\em breast} & 
	{1} & {9920} &&
	{5175} & {\bf 63.7} & {42} &&
	{5172} & {\bf 63.9} & {49} &&
	{4613} & {30}\\
{\em courses} & 
	{55} & {5030} &&
	{64835} & {\bf 84.9} & {268} &&
	{64937} & {\bf 85.1} & {262} &&
	{73287} & {93}\\
{\em mammals} & 
	{700} & {7169} &&
	{65091} & {\bf 83.4} & {427} &&
	{65622} & {\bf 84.1} & {382} &&
	{124737} & {125} \\
{\em mushroom} & 
	{1000} & {123277}& &
	{313428} & {\bf 70.9} &{636} &&
	{262942} & {\bf 59.5} & {1225} &&
	{474240} & {140} \\
{\em nursery} & 
	{50} & {25777} &&
	{314081} & {\bf 93.0} & {276} &&
	{314295} & {\bf 93.1} & {218} &&
	{265064} & {225} \\
{\em pageblocks} & 
	{1} & {63599} &&
	{11961} & {\bf 78.3} & {92} &&
	{11967} & {\bf 78.3} & {95} &&
	{10911} & {53} \\
{\em tic--tac--toe} & 
	{7} & {34019} &&
	{23118} & {\bf 92.0} & {620} &&
	{23616} & {\bf 94.0} & {277} &&
	{28957} & {159} \\
\bottomrule
\end{tabular}
\end{center}
\end{table*}

In this subsection we evaluate {\sc SetPack}, our itemset selection algorithm. 
Recall that this algorithm selects itemsets such that they allow
for building succinct encoding decision trees. The difference with
\textsc{GreedyPack} is that in this setup the resulting itemsets
should be a subset of a given candidate family.
Here, we consider frequent itemsets as candidates.
We set the support threshold such that the experiments with {\sc SetPack}
were finished within $1 \over 2$--2 hours, with an exception of 23 hours for 
considering the large candidate family for {\em mushroom}.
For comparison
we use the same candidates for {\sc Krimp}. We also
compare to {\sc SetPackGreedy}, which required 1--12 minutes, 
7 minutes typically, with an exception of $2$$1 \over 2$ hours for {\em mushroom}.

Comparing the results of this experiment (Table~\ref{table:choose}) with the
results of \textsc{GreedyPack} in the previous experiment, we see that the
selection process is more strict: now even fewer itemsets are regarded as
interesting enough. Large candidate collections are strongly reduced in number:
up to three orders of magnitude. On the other hand, the compression ratios are
still very good. The reason that \textsc{GreedyPack} produces smaller
compression ratios is because it is allowed to consider any itemset.

Further, the fact alone that even with this very strict selection
the compression ratios are generally well below 90\% show that these few 
sets are indeed of high importance to describing the major interactions 
in the data. 

If we compare the number of selected sets to {\sc Krimp}, we see that our method
returns in the same order as many itemsets.
These descriptions require far less bits than those found
by {\sc Krimp}. As such, ours are a better approximation of the Kolmogorov 
complexity of the data.

Between \textsc{SetPack} and \textsc{SetPackGreedy} the outcomes
are very much alike; this goes for both the obtained compression 
as well as the number of returned itemsets. However, the greedy search
of \textsc{SetPackGreedy} allows for much shorter running times.

\section{Discussion}\label{section:discussion}

The experimentation on our methods validates the quality
of the returned models. The models correctly detect 
dependencies in the data while ignoring independencies.
Only a small number of itemsets is returned, which 
are shown to provide strong compression of the data. 
By the MDL principle we then know these describes all 
important regularities in the data distribution in
detail efficiently and without redundancy.
This claim is further supported by the high
classification accuracies our models achieve.

The {\sc GreedyPack} algorithm generally uses more itemsets and obtains
better packing ratios than {\sc SetPack}. While {\sc GreedyPack} is allowed to
use any itemset, {\sc SetPack} may only use frequent itemsets.
This suggests that we may able to achieve better ratios if we use
different candidates, for example, low-entropy sets~\cite{heikinheimo07entropy}.

The running times of the experiments reported in this work
range from seconds to hours and depend mainly on the
number of attributes and rows of the datasets. 
The exhaustive version \textsc{SetPack} may be slow on very large
candidate sets, however, the greedy version 
\textsc{SetPackGreedy} can even handle such families well.
Considering that our current implementation is 
rather na\"{i}ve and the fact that both methods are easily parallelized,
both \textsc{GreedyPack} and \textsc{SetPackGreedy}
are suited for the analysis of large databases.

The main outcomes of our models are the
itemsets that identify the encoding paths. However,
the decision trees from which these sets are extracted
can also be regarded as interesting as these 
provide an easily interpretable view on the major
interactions in the data. Further, just considering
the attributes used in such a tree as an itemset
also allows for simple inspection of the main associations.

In this work we employ the MDL criterion to identify the optimal 
model. Alternatively, one could consider using either BIC or 
AIC, both of which can easily be applied to judge between
our decision tree-based models. 

\section{Conclusions}\label{section:conclusion}

In this paper we presented two methods that find compact
sets of high quality itemsets. Both methods employ
compression to select the group of patterns that
describe {\em all} interactions in the data best.
That is, the data is considered symmetric and thus
both the 0s and 1s are taken into account in these 
descriptions.
Experimentation with our methods showed that high quality
models are returned. Their compact size, typically tens to
thousands of itemsets, allow for easy further analysis of 
the found interactions.

\bibliographystyle{latex8}
\bibliography{pack}

\providecommand{\noopsort}[1]{}
\begin{thebibliography}{10}\setlength{\itemsep}{-1ex}\small

\bibitem{aggarwal98new}
C.~C. Aggarwal and P.~S. Yu.
\newblock A new framework for itemset generation.
\newblock In {\em Proceedings of the ACM SIGACT-SIGMOD-SIGART symposium on
  Principles of Database Systems (PODS)}, pages 18--24. ACM Press, 1998.

\bibitem{agrawal96}
R.~Agrawal, H.~Mannila, R.~Srikant, H.~Toivonen, and A.~I. Verkamo.
\newblock Fast discovery of association rules.
\newblock In {\em Advances in Knowledge Discovery and Data Mining}, pages
  307--328. AAAI, 1996.

\bibitem{brin97beyond}
S.~Brin, R.~Motwani, and C.~Silverstein.
\newblock Beyond market baskets: Generalizing association rules to
  correlations.
\newblock In {\em ACM SIGMOD International Conference on Management of Data},
  pages 265--276. ACM Press, 1997.

\bibitem{brin97itemsets}
S.~Brin, R.~Motwani, J.~D. Ullman, and S.~Tsur.
\newblock Dynamic itemset counting and implication rules for market basket
  data.
\newblock In {\em ACM SIGMOD International Conference on Management of Data},
  pages 255--264, 1997.

\bibitem{bringmannZ07}
B.~Bringmann and A.~Zimmermann.
\newblock The chosen few: On identifying valuable patterns.
\newblock In {\em {IEEE} International Conference on Data Mining ({ICDM})},
  pages 63--72, 2007.

\bibitem{calders02mining}
T.~Calders and B.~Goethals.
\newblock Mining all non-derivable frequent itemsets.
\newblock In {\em Proceedings of the 6th European Conference on Machine
  Learning and Principles and Practice of Knowledge Discovery in Databases},
  pages 74--85, 2002.

\bibitem{summarization}
V.~Chandola and V.~Kumar.
\newblock Summarization - compressing data into an informative representation.
\newblock In {\em Proceedings of the IEEE Conference on Data Mining}, pages
  98--105, 2005.

\bibitem{coenen03}
F.~Coenen.
\newblock The {LUCS}-{KDD} discretised/normalised {ARM} and {CARM} data
  library.
\newblock 2003.

\bibitem{Bayes}
G.~F. Cooper and E.~Herskovits.
\newblock A {B}ayesian method for the induction of probabilistic networks from
  data.
\newblock {\em Machine Learning}, 9:309--347, 1992.

\bibitem{inftheory}
T.~Cover and J.~Thomas.
\newblock {\em Elements of Information Theory, 2nd ed.}
\newblock John Wiley and Sons, 2006.

\bibitem{dumouchel01empirical}
W.~DuMouchel and D.~Pregibon.
\newblock Empirical bayes screening for multi-item associations.
\newblock In {\em {ACM} {SIGKDD} Conference on Knowledge Discovery and Data
  Mining}, pages 67--76, 2001.

\bibitem{faloutsos07}
C.~Faloutsos and V.~Megalooikonomou.
\newblock On data mining, compression and kolmogorov complexity.
\newblock In {\em Data Mining and Knowledge Discovery}, volume~15, pages 3--20.
  Springer, 2007.

\bibitem{peter07mdl}
P.~D. Gr\"{u}nwald.
\newblock {\em The Minimum Description Length Principle}.
\newblock MIT Press, 2007.

\bibitem{han07}
J.~Han, H.~Cheng, D.~Xin, and X.~Yan.
\newblock Frequent pattern mining: Current status and future directions.
\newblock In {\em Data Mining and Knowledge Discovery}, volume~15. Springer,
  2007.

\bibitem{han00mining}
J.~Han and J.~Pei.
\newblock Mining frequent patterns by pattern-growth: methodology and
  implications.
\newblock {\em SIGKDD Explorations Newsletter}, 2(2):14--20, 2000.

\bibitem{heikinheimo07entropy}
H.~Heikinheimo, E.~Hinkkanen, H.~Mannila, T.~Mielik\"{a}inen, and J.~K.
  Sepp\"{a}nen.
\newblock Finding low-entropy sets and trees from binary data.
\newblock In {\em {ACM} {SIGKDD} Conference on Knowledge Discovery and Data
  Mining}, pages 350--359, 2007.

\bibitem{jaroszewicz05bayes}
S.~Jaroszewicz and T.~Scheffer.
\newblock Fast discovery of unexpected patterns in data, relative to a bayesian
  network.
\newblock In {\em {ACM} {SIGKDD} Conference on Knowledge Discovery and Data
  Mining}, pages 118--127, 2005.

\bibitem{jaroszewicz04bayes}
S.~Jaroszewicz and D.~A. Simovici.
\newblock Interestingness of frequent itemsets using bayesian networks as
  background knowledge.
\newblock In {\em {ACM} {SIGKDD} Conference on Knowledge Discovery and Data
  Mining}, pages 178--186, 2004.

\bibitem{knobbe06a}
A.~J. Knobbe and E.~K.~Y. Ho.
\newblock Maximally informative k-itemsets and their efficient discovery.
\newblock In {\em {ACM} {SIGKDD} Conference on Knowledge Discovery and Data
  Mining}, pages 237--244, 2006.

\bibitem{knobbe06b}
A.~J. Knobbe and E.~K.~Y. Ho.
\newblock Pattern teams.
\newblock In {\em Proceedings of the 10th European Conference on Machine
  Learning and Principles and Practice of Knowledge Discovery in Databases},
  pages 577--584, 2006.

\bibitem{kontkanen07linear}
P.~Kontkanen and P.~Myllym\"{a}ki.
\newblock A linear-time algorithm for computing the multinomial stochastic
  complexity.
\newblock {\em Information Processing Letters}, 103(6):227--233, 2007.

\bibitem{leeuwen06}
{\noopsort{Leeuwen}}M.{van Leeuwen}, J.~Vreeken, and A.~Siebes.
\newblock Compression picks the item sets that matter.
\newblock In {\em Proceedings of the 10th European Conference on Machine
  Learning and Principles and Practice of Knowledge Discovery in Databases},
  pages 585--–592, 2006.

\bibitem{paul}
M.~Li and P.~Vit\'{a}nyi.
\newblock {\em An Introduction to Kolmogorov Complexity and its Applications}.
\newblock Springer-Verlag, 1993.

\bibitem{meo00dependence}
R.~Meo.
\newblock Theory of dependence values.
\newblock {\em ACM Trans. Database Syst.}, 25(3):380--406, 2000.

\bibitem{mitchell-jones99}
A.~J. Mitchell-Jones, G.~Amori, W.~Bogdanowicz, B.~Krystufek, P.~J.~H.
  Reijnders, F.~Spitzenberger, M.~Stubbe, J.~B.~M. Thissen, V.~Vohralik, and
  J.~Zima.
\newblock {\em The Atlas of European Mammals}.
\newblock Academic Press, 1999.

\bibitem{murthy96thesis}
K.~V.~S. Murthy.
\newblock {\em On growing better decision trees from data}.
\newblock PhD thesis, Johns Hopkins Univ., Baltimore, 1996.

\bibitem{nijssen07decision}
S.~Nijssen and {\'E}.~Fromont.
\newblock Mining optimal decision trees from itemset lattices.
\newblock In {\em {ACM} {SIGKDD} Conference on Knowledge Discovery and Data
  Mining}, pages 530--539, 2007.

\bibitem{pasquier99discovering}
N.~Pasquier, Y.~Bastide, R.~Taouil, and L.~Lakhal.
\newblock Discovering frequent closed itemsets for association rules.
\newblock {\em Lecture Notes in Computer Science}, 1540:398--416, 1999.

\bibitem{rissanen96fisher}
J.~Rissanen.
\newblock Fisher information and stochastic complexity.
\newblock {\em IEEE Transactions on Information Theory}, 42(1):40--47, 1996.

\bibitem{siebes06}
A.~Siebes, J.~Vreeken, and M.~van Leeuwen.
\newblock Item sets that compress.
\newblock In {\em Proceedings of the {SIAM} Conference on Data Mining}, pages
  393--404, 2006.

\bibitem{tatti08rank}
N.~Tatti.
\newblock Maximum entropy based significance of itemsets.
\newblock {\em Knowledge and Information Systems (KAIS)}, 2008.
\newblock Accepted for publication.

\bibitem{tatti08partition}
N.~Tatti and H.~Heikinheimo.
\newblock Decomposable families of itemsets.
\newblock In {\em Proceedings of the 12th European Conference on Machine
  Learning and Principles and Practice of Knowledge Discovery in Databases},
  2008.

\bibitem{vreeken07a}
J.~Vreeken, M.~van Leeuwen, and A.~Siebes.
\newblock Characterising the difference.
\newblock In {\em {ACM} {SIGKDD} Conference on Knowledge Discovery and Data
  Mining}, pages 765--–774, 2007.

\bibitem{vreeken07b}
J.~Vreeken, M.~van Leeuwen, and A.~Siebes.
\newblock Preserving privacy through data generation.
\newblock In {\em Proceedings of the IEEE Conference on Data Mining}, pages
  685--690, 2007.

\end{thebibliography}

\end{document}